\pgfplotsset{compat=newest}
\pgfplotsset{every axis/.style={scale only axis}}
\pgfplotsset{
  major grid style={thin,dotted},
  minor grid style={thin,dotted},
  ymajorgrids=true,
  yminorgrids=true,
  every axis/.append style={
    line width=0.7pt,
    tick style={
      line cap=round,
      thin,
      major tick length=4pt,
      minor tick length=2pt,
    },
  },
  legend cell align=left,
  legend style={
    line width=0.7pt,
    /tikz/every even column/.append style={column sep=3mm,black},
    /tikz/every odd column/.append style={black},
  },
  legend style={font=\small},
  title style={yshift=-2pt},
  enlarge x limits=0.04,
  every tick label/.append style={font=\footnotesize},
  every axis label/.append style={font=\small},
  every axis y label/.append style={yshift=-1ex},
  /pgf/number format/1000 sep={},
  axis lines*=left,
  xlabel near ticks,
  ylabel near ticks,
  axis lines*=left,
  label style={font=\footnotesize},
  tick label style={font=\footnotesize},
  plotMaximumLoadFactor/.style={
    width=36.0mm,
    height=35.0mm,
  },
}
\def\e{\mathrm{e}}  
\crefname{listing}{Algorithm}{Algorithms}
\let\oldcite\cite
\renewcommand\cite{\unskip~\oldcite}
\newcommand{\myparagraph}[1]{\subparagraph*{#1}}
\newcommand{\Oh}[1]{\mathrm{O}\!\left( #1\right)}
\newcommand{\Om}[1]{\Omega\!\left( #1\right)}
\newcommand{\mytitle}{SicHash --- Small Irregular Cuckoo Tables for Perfect Hashing}
\title{\mytitle}
\newcommand{\mythanks}[3]{\thanks{#1 \href{mailto:#2}{\includegraphics[height=8pt]{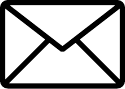}} \href{https://orcid.org/#3}{\includegraphics[height=8pt]{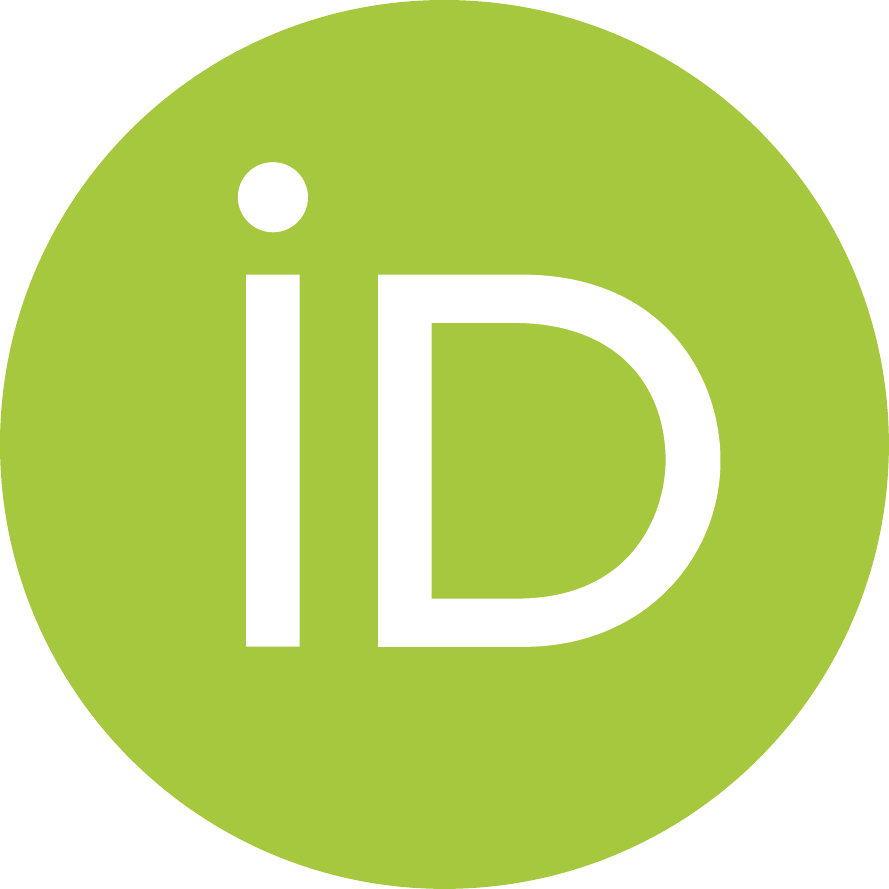}}}}
\date{}
\author{
  Hans-Peter Lehmann\mythanks{Karlsruhe Institute of Technology, Germany.}{hans-peter.lehmann@kit.edu}{0000-0002-0474-1805}
  \and
  Peter Sanders\mythanks{Karlsruhe Institute of Technology, Germany.}{sanders@kit.edu}{0000-0003-3330-9349}
  \and
  Stefan Walzer\mythanks{Cologne University, Germany.}{walzer@cs.uni-koeln.de}{0000-0002-6477-0106}
}
\begin{document}
\maketitle

\begin{abstract}
  A Perfect Hash Function (PHF) is a hash function that has no collisions on a given input set.
  PHFs can be used for space efficient storage of data in an array, or for determining a compact representative of each object in the set.
  In this paper, we present the PHF construction algorithm SicHash --- Small Irregular Cuckoo Tables for Perfect Hashing.
  At its core, SicHash uses a known technique:
  It places objects in a cuckoo hash table and then stores the final hash function choice of each object in a retrieval data structure.
  We combine the idea with \emph{irregular} cuckoo hashing, where each object has a different number of hash functions.
  Additionally, we use many small tables that we \emph{overload} beyond their asymptotic maximum load factor.
  The most space efficient competitors often use brute force methods to determine the PHFs.
  SicHash provides a more direct construction algorithm that only rarely needs to re-compute parts.
  Our implementation improves the state of the art in terms of space usage versus construction time for a wide range of configurations.
  At the same time, it provides very fast queries.
\end{abstract}

\section{Introduction}
A Perfect Hash Function (PHF) is a hash function that does not have collisions on a given set $S$ of objects, i.e., is injective.
In this paper, we call the number of objects $N$ and the size of the output range $M$.
A PHF with $M=N$ is called \emph{Minimal} Perfect Hash Function (MPHF).
For $M$ close to $N$, it is likely that an ordinary hash function has collisions.
It is therefore necessary to store additional information, specific to the set $S$, that help to avoid these collisions.
The lower space bound for an MPHF is $1.44$ bits/object \cite{belazzougui2009hash}.
PHFs can be represented with less space, depending on the load factor $N/M$.
In the literature, load factors between $0.8$ and $1.0$ are common \cite{belazzougui2009hash, pibiri2021pthash}.

PHFs can be used to implement a hash table where each query only needs to access one single cell.
When we know that the hash table is never queried for objects $\not\in S$, table cells can store the plain payload data without keys.
This results in a retrieval data structure that can be updated efficiently.
Refer to \cref{sec:prelim} for more details about retrieval data structures and an introduction to cuckoo hashing, which is a main building block of SicHash.
There is a wide range of PHF construction algorithms, which we review extensively in \cref{sec:related}.

The basic idea of SicHash is to distribute the input objects to a number of small buckets and build a cuckoo hash table in each.
To obtain a PHF, we store which hash function index was finally used to place each object by using a retrieval data structure.
The reason for constructing \emph{small} tables is the use of overloading, which we describe in \cref{sec:overloading}. 
In \cref{sec:theDataStructure}, we explain SicHash in detail, giving construction and query algorithms.
Enhancements of the basic scheme, including one that produces MPHFs, are given in \cref{sec:enhancements}.
We analyze SicHash in \cref{sec:analysis}.
We then provide an extensive experimental evaluation in \cref{sec:experiments}, comparing it with a wide range of competitors.
In \cref{sec:conclusion}, we summarize the results and discuss possible future work.

\myparagraph{Comparison with Previous Approaches.}
The most space-efficient previous algorithms perform brute-force search as a core step to determine a perfect hash function.
In particular, both CHD \cite{belazzougui2009hash} and RecSplit \cite{esposito2020recsplit}
at some point try out random hash functions (on subproblems) until one happens to be injective.
Refer to \cref{sec:related} for details.
Our construction is more directed than this because it constructs cuckoo hash tables as its base case, which is possible in polynomial time.
The directedness is also visible in the experiments, where our method can construct PHFs with the same space requirements significantly faster than the competitors.

\myparagraph{Our Contribution.}
We combine and refine several known ideas in a novel way leading to excellent space--construction time trade-offs  while using very low query time.
We base SicHash on the known idea of PHF generation through cuckoo hashing.
We use \emph{irregular} cuckoo hashing, which was previously considered for reducing search time \cite{dietzfelbinger2010tight}.
For that application it was of little help apart from allowing to interpolate between two uniform degree cases.
In contrast, for our application to reduce space, it is helpful even for integer average degree.
Space is further reduced using the novel idea to \emph{overload}  the cuckoo hash tables, i.e., to load them with more objects than would be possible in an asymptotic sense, exploiting that the tables are small.
All this keeps the queries extremely simple --- basically the cost for a single access to a retrieval data structure.
This further profits from recent advances on fast static retrieval data structures with virtually no space overhead \cite{dillinger2021fast}.
In turn, our PHFs can be used to obtain improved \emph{updateable} retrieval data structures as discussed before.

\section{Preliminaries}
\label{sec:prelim}
\myparagraph{Cuckoo Hashing.}
Cuckoo hashing \cite{pagh2004cuckoo} is a well known approach to hash tables with open addressing.
In a basic cuckoo hash table, each object can be placed in one of two cells, determined by two hash functions.
Queries load the two candidate cells and compare both objects.
Insertion applies one of the hash functions and places the new object in the corresponding cell.
If the cell is already occupied, the object previously placed in that cell is pushed out and is recursively inserted using its other hash function.

Instead of locating each object in one of two cells, the idea can be generalized to $d$ cells \cite{fotakis2005space} by using $d$ hash functions.
In \emph{irregular} cuckoo hash tables, different objects can have a different number of choices \cite{dietzfelbinger2010tight}.
For example, some percentage of the objects get $d_1$ choices, some $d_2$ choices, and some $d_3$ choices.
Averaging over the $d_i$, the method enables $d$-ary cuckoo hashing with non-integer $d$ and higher load factors than a simple interpolation between two ordinary cuckoo hash tables \cite{dietzfelbinger2010tight}.

\myparagraph{Retrieval Data Structures.}
A \emph{retrieval data structure} or \emph{static function} on a set $S$ of objects describes
a function $f: S\rightarrow\{0,1\}^r$ that returns a specific $r$-bit value for each object.
Applying the function on an object not in $S$ can return an arbitrary value.
The lower bound of the space requirement of a retrieval data structure is $rN$ bits.
The best retrieval data structures now come very close to the lower bounds (around 1\% overhead) and are also quite fast \cite{dillinger2021fast}.

\myparagraph{PHF Construction by Cuckoo Hashing.}
To the best of our knowledge, constructing PHFs through cuckoo hashing was only mentioned very briefly before \cite{dillinger2021fast}.
In this paragraph, we give a more detailed and intuitive introduction to the idea.
A related idea is the construction of PHFs by solving a matching as described by, e.g., Botelho et al. \cite{botelho2013practical} and Navarro \cite[Section 4.5.3]{navarro2016compact}.

Assume that all input objects are inserted into a $d$-ary cuckoo hash table.
The table then implicitly describes an injective mapping from objects to table cells, because each cell only stores one object.
For perfect hashing, we are not interested in storing the objects themselves but only in mapping objects to numbers, e.g., table cell indices.
Because each object can only be placed in $d$ cells using $d$ hash functions $h_i$, we can remember the placement of each object by simply storing which of the hash functions was finally used to place the object.
We can do that using only about $\log d$ bits\footnote{Throughout this paper, $\log x$ stands for $\log_2 x$.} per object by constructing a retrieval data structure.
A query for an object $x$ then retrieves the hash function index $i(x)$ and executes $h_{i(x)}(x)$ to obtain a perfect hash function.

\myparagraph{Elias-Fano Coding.}
\label{sec:eliasFano}
Elias-Fano Coding \cite{Elias74, Fano71} is a way to efficiently store a monotonic sequence of $N$ integers.
It consists of two data structures, a bit vector $H$, and an array $L$.
An item at position $i$ is split into two parts.
The $\log N$ upper bits $u$ are stored as a 1-bit in $H[i + u]$.
The remaining lower bits are directly stored in $L$.
Items can be accessed in constant time by finding the $i$-th $1$-bit in $H$ using a $\textit{select}_1$ data structure and by looking up the lower bits in $L$.
The space usage of an Elias-Fano coded sequence is $2N+N\lceil \log U/N\rceil$ bits, where $U$ is the maximum value of an item.

\myparagraph{Golomb-Rice Coding.}
\label{sec:rice}
Golomb coding \cite{golomb1966run} with parameter $k$ can be used to store a sequence of integers that have a geometric distribution.
The idea is to store each integer $x$ as a quotient $q=\lfloor x/k \rfloor$ in unary coding and a remainder $x-qk$ in truncated binary coding.
Rice coding \cite{rice1979some} is Golomb coding where $k$ is a power of 2.
This makes arithmetics more efficient and simplifies storing the remainder to a normal array with binary coding.
Items can be accessed in constant time by looking up the array and reconstructing the quotient using a $\textit{select}_1$ query.

\section{Related Work}
\label{sec:related}
In the following, we first describe variants and enhancements of cuckoo hashing from the literature.
Afterwards, we introduce existing PHFs, most of which we later include in our experimental evaluation (see \cref{sec:experiments}).

\subsection{Cuckoo Hashing.}
After describing variants of cuckoo hashing, we describe construction algorithms and maximum load factors.

\myparagraph{Variants.}
Higher maximum load factors can be achieved by making the cells larger, so that they hold more than one object \cite{dietzfelbinger2007balanced}.
When then allowing the cells to overlap \cite{lehman20093}, even higher load factors are possible \cite{walzer2017load}.
For our application to perfect hashing, we only consider cells of size $1$.
On external memory, I/Os can be reduced by choosing candidate cells on the same page \cite{dietzfelbinger2011cuckoo}.
Maintaining two tables \cite{pagh2004cuckoo} of asymmetric size \cite{kutzelnigg2010improved} can improve the search time because more objects can be placed using their first hash function.
Giving each object $d$ instead of $2$ choices \cite{fotakis2005space} increases the maximum load factor.
\emph{Irregular} cuckoo hashing \cite{dietzfelbinger2010tight} uses a different number of hash function for each object.
A similar idea can also be found in coding theory, where each message bit is covered by an irregular number of check bits \cite{luby2001efficient}.
Specifically, the probability that a message bit is covered by $i$ check bits is proportional to $1/i$.
Another related result is the weighted Bloom filter \cite{bruck2006weighted}, where objects get a different number of hash functions (and therefore false positive probability) based on their query frequency and membership likelihood.

\myparagraph{Construction.}
The enhancement to $d$-ary cuckoo hashing \cite{fotakis2005space} makes insertions more complex because it is no longer clear which of the alternative cells to displace objects to.
Common ways to perform insertion are to find a shortest move sequence by performing \emph{breadth-first-search (BFS)} in a graph defining possible object moves
or by performing a \emph{random walk} in that graph. Both approaches need constant expected time when the table is not too highly loaded \cite{fotakis2005space,walzer2022insertion,frieze2009analysis,fountoulakis2013insertion,khosla2013balls, khosla2019faster}.

In this paper, we are interested in the static case, where all objects to be stored are known from the start.
In that case, it is also possible to construct the whole hash table at once instead of using incremental insertions.
Let us model the cuckoo hash table as a bipartite graph.
The first set of graph nodes is simply the set of input objects and the second set represents the table cells.
Edges connect each object to its candidate cells, as determined by the $d$ hash functions.
A matching of size $N$ then gives a collision free assignment from objects to table cells.
This can be calculated using, for example, the Hopcroft-Karp-Karzanov algorithm \cite{hopcroft1973n} or the LSA algorithm \cite{khosla2013balls, khosla2019faster}.

\myparagraph{Load Factors and Space Usage.}
\label{sec:loadFactors}
Classic cuckoo hashing with $d=2$ hash functions has a maximum load factor of at most $N/M=0.5$.
Using $d=4$ hash functions already increases the maximum load factor to $0.9768$ \cite{fountoulakis2012sharp, walzer2021peeling}.
In our construction, the load factor of the PHF equals the load factor of the cuckoo hash table, and the storage space is determined by the number of hash functions $d$.
This means that a PHF from binary cuckoo hashing with a load factor of $0.5$ can be represented using $\log 2=1$ bit per object.
A PHF with a load factor of $0.9768$ can be implemented using $2$ bits per object.

Ref. \cite{dietzfelbinger2010tight} gives maximum load factors for irregular cuckoo hashing, depending on the distribution of hash functions used.
When looking at a specific average number of hash functions $d'\in\mathds{R}$, the best load factors are given by combining objects with $\lfloor d' \rfloor$ and $\lceil d' \rceil$ hash functions \cite{dietzfelbinger2010tight}.
As we will see in \cref{sec:overloading}, this is not the case in the context of PHFs because we are looking at storage space instead of the average hash function.

\subsection{Perfect Hashing.}
\label{sec:relworkPerfectHashing}
The perfect hashing problem is already considered since the 1970s \cite{sprugnoli1977perfect, jaeschke1981reciprocal, brain1990perfect}, and is still an active area of research.
In the following paragraphs, we describe more recent papers.

\myparagraph{Order-Preserving.}
An order-preserving PHF maintains the order that the input objects are given in.
CHM \cite{czech1992optimal} and BMZ \cite{botelho2004new} construct an undirected graph with edges $\{(h_1(x),h_2(x)) | x \in S\}$.
They then assign a number to each vertex, such that for each edge, the sum of numbers stored in adjacent vertices gives the desired PHF value.
This can be done by assigning $0$ to an arbitrary vertex and then performing depth-first-search to assign all neighbors by simple subtraction.
CHM and BMZ store $2.09N$ and $1.15N$ integer numbers, respectively, therefore needing $\Oh{N \log(N)}$ space.
Note that space near $N\log N$ can also be achieved by explicitly storing the rank of the objects in a retrieval data structure.

\myparagraph{BDZ.}
In the BDZ algorithm \cite{botelho2013practical}, also called RAM algorithm or BPZ algorithm, each input object is mapped to an edge in a random hypergraph using $d$ independent hash functions $h_i$.
The hypergraph needs to be peelable\footnote{Possibility of obtaining a graph without edges by iteratively taking away edges that contain a node with degree 1 \cite{botelho2013practical,walzer2021peeling}.} in order to continue.
By peeling the graph, BDZ determines $i(x)$, such that $h_{i(x)}(x)$ is unique for each object $x$.
It then uses a linear equation system to determine a function $g$, such that
$i(x)=\left(\sum_{0 \leq i < d}g(h_i(x))\right) \textrm{ mod } d$.

Even though our presentation using cuckoo hashing sounds different, SicHash is similar to this idea.
The BDZ algorithm's task of finding a unique $h_{i(x)}(x)$ for each object can be interpreted as placing the objects in a cuckoo hash table.
The function $g$ serves as a retrieval data structure that maps each object to a hash function index $i(x)$.
The most important difference is that the BDZ algorithm couples retrieval and object placement by using the same set of hash functions.
In particular, $g$ is evaluated for the entire range $M$, so the space to store $g$ depends on $M$.
SicHash, in contrast, separates the two tasks of object placement and hash function retrieval.
This enables using a retrieval data structure of size $N$ instead.
Moreover, SicHash uses irregular cuckoo hashing, which cannot be represented efficiently with the integrated retrieval data structure of BDZ.
Finally, SicHash does not depend on peelability.

\myparagraph{WBPM.}
Weaver et al. \cite{weaver2020constructing} describe an algorithm for calculating MPHFs that is based on weighted bipartite matchings (WBPM).
The left set of the graph is determined by the $M=N$ input objects and the right set is determined by the $N$ possible hash values.
The edges are determined by applying $\Oh{\log(N)}$ hash functions to each object, where an edge determined from the $i$-th hash function has weight $i$.
The weighted matching can be solved with a weight of $1.83N$, giving an assignment from objects to hash values.
For storing which hash function to use for each object, WBPM uses a $1$-bit retrieval data structure.
The keys to the retrieval data structure are tuples of object and hash function index.
The stored value is $1$ for the hash function to finally be used, and $0$ for all smaller indices.
The weight of $1.83N$ therefore also equals the space usage of the final data structure, except for overheads like prefix sums due to bucketing.

SicHash uses a similar structure but simplifies each of the ingredients.
Instead of a \emph{weighted} bipartite matching, SicHash (implicitly) solves a \emph{non-weighted} bipartite matching by constructing a cuckoo hash table.
Instead of querying a $1$-bit retrieval data structure multiple times for each hash function evaluation, SicHash only performs a single query to a retrieval data structure.
While WBPM constructs a retrieval data structure consisting of $1.83N$ objects, SicHash generates retrieval data structures with a total of $N$ objects, which makes the construction faster.
While WBPM's space usage is competitive for MPHFs, constructing a non-minimal PHF is less efficient.
With a load factor of $0.85$, for example, SicHash achieves a space usage of $1.43N$ bits, while our preliminary experiments show that a matching like above has a weight of $1.54N$.
This stems from the fact that SicHash stores the selected hash function index using binary code, while WBPM effectively uses unary code.

\begin{figure*}[t]
  \input{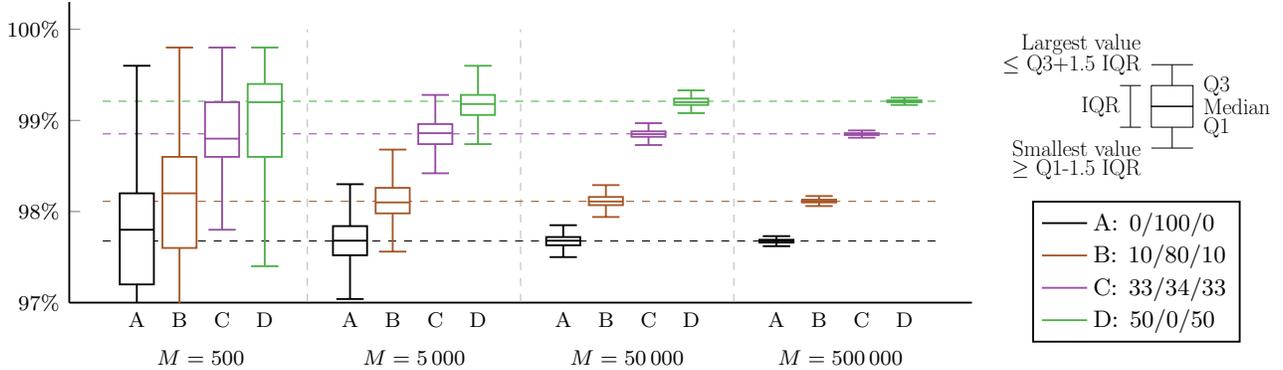}
  \caption{Achieved load factors when running different irregular cuckoo hashing configurations, which all need the same storage space (2 bits). The configurations are described by the percentages of objects with 2/4/8 choices, having a space usage of 1/2/3 bits, respectively. The configuration 0/100/0 refers to ordinary 4-ary cuckoo hashing. Horizontal lines indicate numerically calculated load thresholds (see \cref{sec:loadThresholds}).}
  \label{fig:loadFactorBoxplot}
\end{figure*}

\myparagraph{CHD \cite{belazzougui2009hash} / FCH \cite{fox1992faster} / PTHash \cite{pibiri2021pthash}.}
\label{sec:pthash}
The basic idea of this approach is to hash the input objects to a number of buckets with expected equal size.
After sorting the buckets by their size, the methods search for a hash function that stores the entire bucket in the final output domain without causing collisions.
CHD \cite{belazzougui2009hash} tries out hash functions linearly, so that the data to store for each bucket can be compressed efficiently.
With a load factor of $81\%$, CHD can construct a PHF using $1.4$ bits per object.
With a load factor of $99\%$, it achieves $1.98$ bits per object.
In FCH \cite{fox1992faster}, the bucket sizes are asymmetric --- using the default parameters, 60\% of the objects are mapped to 30\% of the buckets.
After hashing the objects of a bucket, it searches for a rotation value that (mod $M$) places all objects into the output range without collisions.
FCH produces MPHFs with about 2 bits per object.
PTHash \cite{pibiri2021pthash} combines the two ideas by using asymmetric buckets and trying hash functions linearly to enable a compressible representation.
To speed up searching for the hash function in each bucket, it first generates a non-minimal PHF.
Instead of using the well known \emph{rank trick} \cite{botelho2007simple} to convert a PHF to an MPHF, PTHash uses a new approach that enables faster queries:
We can interpret the values in the output range $M$ that no object is mapped to as \emph{holes}.
PTHash now stores an array of size $M-N$ that re-maps each object with a hash value $> N$ to one of the holes.
Given that the holes are a monotonic sequence of integers, the array can be compressed with Elias-Fano coding.
Because most objects are mapped to values $\leq N$, the expensive lookups into the Elias-Fano sequence are only rarely needed.

\myparagraph{MeraculousHash \cite{CSLSR11} / FiPHa \cite{MSSZ14} / BBHash \cite{limasset2017fast}.}
This approach hashes the objects to $\gamma N$ buckets, for $\gamma \in \mathds{R}, \gamma \geq 1$.
If a bucket has exactly one object mapped to it, that mapping is injective.
Objects from buckets that hold more than one object are bumped to the next layer of the same data structure.
For each layer, a bit vector of size $\gamma N$ indicates which buckets had stored a single object, e.g., where the recursive query can stop.
Together with a rank data structure, that bit vector can also be used to make the resulting hash function minimal.
This approach allows fast construction and queries but needs space at least $e$ bits per object \cite{MSSZ14} -- more than $4$ for really good speed. 

\myparagraph{RecSplit.}
RecSplit \cite{esposito2020recsplit} distributes all objects to buckets of expected equal size.
Within each bucket, RecSplit first searches for a hash function with binary output that splits the set of objects in half.%
\footnote{The number of splits is larger for the bottom recursion layers.}
This is repeated recursively until the set of objects has a small, configurable size.
Usual values for the leaf size are about 8--12 objects.
At a leaf, RecSplit then tries out hash functions until it finds one that is a bijection.
For each bucket, RecSplit stores the hash function seeds that make up the splitting tree, the seeds for each leaf, and a prefix sum of the bucket sizes.
There are configurations that need only $1.56$ bits per object.

\section{Overloading}
\label{sec:overloading}
In cuckoo hashing, the maximum load factor of a table is a widely studied subject (see \cref{sec:loadFactors}).
For example, it is well known that a table with $d=2$ hash functions has a \emph{load threshold} of $50\%$.
For $M\rightarrow\infty$, the probability of successful table construction with load factor $>50\%$ approaches $0$, while it approaches $1$ for load factors $<50\%$.
Let us now look at a very small table of size $M=3$ storing $N=2$ objects.
This table has a load factor of $\approx 66\%$, which is more than load the threshold.
Still, the probability of successful construction, i.e., not all four hash function values being the same, is $1-3\cdot(1/3)^4 \approx 88\%$.
This shows that the load factors can be considerably higher than the asymptotic limits when using small tables.
We call a table that contains more objects than the asymptotic limit \emph{overloaded}.

To illustrate the achievable load actors, we incrementally construct cuckoo hash tables.
\Cref{fig:loadFactorBoxplot} gives a box plot for the load factors at which the insertion finally fails.
It shows three fundamental observations that we use in SicHash to increase the load factor while decreasing the amount of memory needed.

\myparagraph{(1) Variance.}
Unsurprisingly, small tables have a higher variance in their achieved load factors.
Therefore, improved load by the standard deviation is possible by just retrying a constant number of times in expectation.

\begin{figure*}[t]
  \centering
  \includegraphics[scale=0.8]{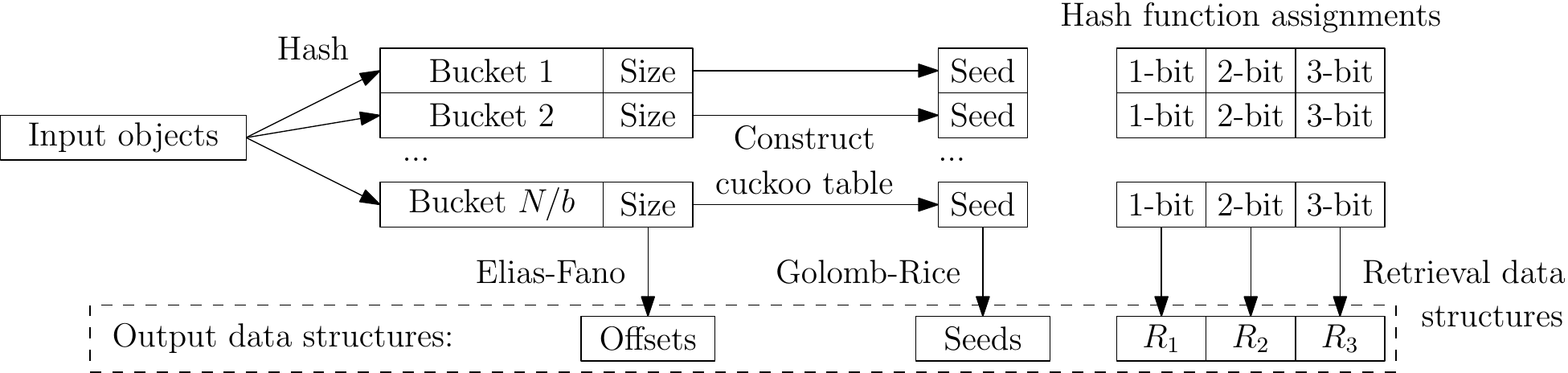}
  \caption{Data flow during construction. Objects are hashed to buckets of expected equal size. Within each bucket, a cuckoo hash table is constructed. The resulting hash function assignments from all small hash tables are stored together in three large retrieval data structures.}
  \label{fig:constructionDataFlow}
\end{figure*}

\myparagraph{(2) Median.}
For some configurations, small tables not only enable higher load factors because of the variance, but also because their median is higher than the load threshold.
The effect is even more pronounced for ordinary binary cuckoo hashing, where a table with $M=500$ has a median load factor of 56\% (see \cref{sec:overloadBinary}).
A similar observation can be found in perfect hashing:
The information theoretic minimal space to store an MPHF with small $N$ is significantly lower than the asymptotic value for $N\rightarrow\infty$ \cite{weaver2020constructing}.

\myparagraph{(3) Space Usage.}
A metric for the lookup efficiency in irregular cuckoo hash tables is the average number of hash functions.
For a specific average number of hash functions $d'\in\mathds{R}$, the best load factors are given by combining objects with $\lfloor d' \rfloor$ and $\lceil d' \rceil$ hash functions \cite{dietzfelbinger2010tight}.
This picture changes fundamentally in the context of PHFs because the choice of hash functions is stored in binary coding.
While, for example, an irregular cuckoo hash table with $50\%$ $2$-choice and $50\%$ $4$-choice has $3$ choices on average, the storage space of the corresponding PHF is only $0.5 \log(2) + 0.5 \log(4) = 1.5 < \log(3)$.
The configurations in \cref{fig:loadFactorBoxplot} all need the same storage space but the median load factor increases the farther we are from the optimal configuration derived in Ref. \cite{dietzfelbinger2010tight}.

\myparagraph{Conclusion.}
Smaller cuckoo hash tables give better load factors than larger ones.
Equivalently, by making the tables smaller, we can achieve the same load factor using a hash function mixture that needs less space.
Even though we have to store a seed because the variance in the load factors is higher, we can use the effect of overloading to save overall space.

\section{SicHash Perfect Hash Functions}
\label{sec:theDataStructure}
The following section introduces the main result of our paper: SicHash perfect hash functions.
SicHash combines four results: Using cuckoo hashing for perfect hashing, making cuckoo hashing irregular, overloading small tables, and fast space efficient retrieval.

\subsection{Construction.}
Building a SicHash function consists of the main steps partitioning, cuckoo hashing, storing bucket metadata, and constructing the retrieval data structures.
\Cref{fig:constructionDataFlow} gives an overview over the process.

\myparagraph{Partitioning.}
First, we hash the input objects to a number of buckets that all have the same expected size $b$, for example $b = 5000$.
The small size enables overloading and also keeps the storage space during construction small enough that the whole table fits into the cache.

\myparagraph{Cuckoo Hashing.}
Within each bucket, we generate an irregular cuckoo hash table, using the same load factor as the overall perfect hash function.
The number of cells in each of the small cuckoo hash tables is determined by the number of objects hashed to it, so the small tables have different sizes.
However, the probability of a successful construction is similar for all of them.
To determine how many hash functions (and therefore candidate cells) should be used for each object, we hash each object to a \emph{class}.
A certain percentage $p_1$ of objects is placed with $d=2$ choices, a percentage $p_2$ with $d=4$ choices, and the remaining objects with $d=8$ choices.

To insert objects into the hash tables, we use \emph{rattle kicking} \cite{kuszmaul2016kickout} instead of the classical random walk.
Rattle kicking maintains a counter for each object, indicating how often it was moved to a new cell.
An object is then only evicted from its cell when its rattle counter is lower than the counter of the object to be inserted into the same cell.
The hash function index to use for inserting is the rattle counter modulo $d$.
For normal cuckoo hash tables, storing the counter would decrease the space efficiency.
In our case, we only store the hash table temporarily, and also need the hash function index to construct the PHF later.
This makes SicHash an attractive application for rattle kicking.
With rattle kicking, we can avoid the cost of random number generation and empirically need a smaller number of steps during the insertion.
Constructing a bucket may fail, in particular, when we configure a high degree of overloading.
In this case, construction is retried while incrementing a seed value determining the used hash functions.

\begin{figure*}[t]
  \centering
  \includegraphics[scale=0.8]{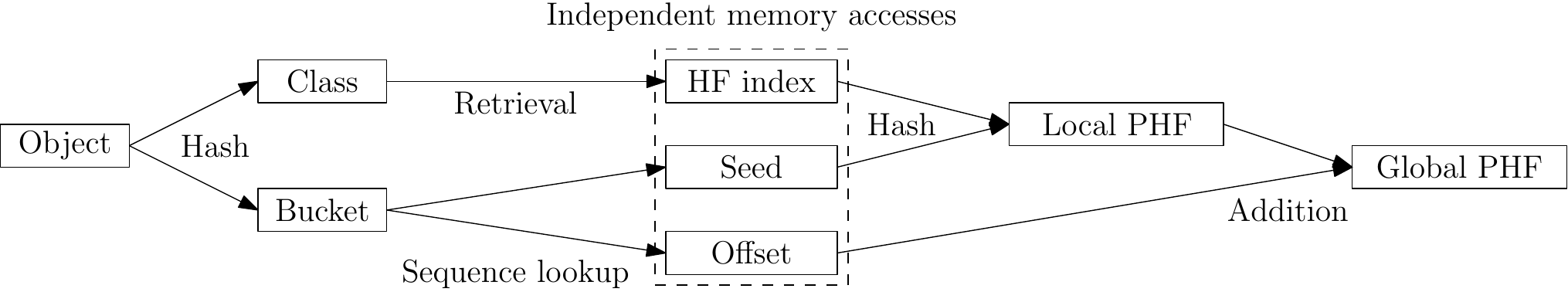}
  \caption{Data flow during a query. From the class of an object, we get the retrieval data structure to query. From the bucket index, we get the bucket's seed and offset. Combining this, we get a global hash function value.}
  \label{fig:queryDataFlow}
\end{figure*}

\myparagraph{Storing Bucket Metadata.}
The result is a number of small hash tables, each with a seed leading to successful construction.
In order to determine a global PHF, we need to offset each small table.
We can do that by storing the exclusive prefix sum of table sizes using Elias-Fano coding.
When trying out seeds, we can simply count up, starting with $0$.
Construction usually succeeds with one of the first few seeds, so we can compress the list of seeds using Golomb-Rice Codes.
In practice, storing the two sequences in arrays offers better query time and negligible space overhead.

\myparagraph{Retrieval.}
Now we only need to store the assignment from objects to cells within the small hash tables by storing which of the hash functions finally placed each object.
Because the hash tables are irregular, we get indices of $1$, $2$, and $3$ bits.
While small hash tables enable overloading, retrieval data structures in contrast profit from handling many objects and can achieve overheads as low as $1\%$ \cite{dillinger2021fast}.
We therefore build $3$ large retrieval data structures that hold the $1$, $2$, and $3$-bit values from \emph{all} the small hash tables.
The space usage of the final PHF is dominated by the retrieval data structures.

\myparagraph{Parameters.}
A SicHash PHF has three main tuning parameters: The load factor $\alpha=N/M$ to try construction for, and the class sizes for irregular cuckoo hashing, $p_1$ and $p_2$.
Ignoring overloading, it is then possible to numerically determine a configuration that maximizes the load factor (see \cref{sec:loadThresholds}).
Numerically calculating efficient configurations \emph{with} overloading remains an open problem.

When a specific space budget $\beta$ for the retrieval data structures is given, we get $p_2=3-\beta-2p_1$.
By transforming $0 \leq p_1,p_2,p_3 \leq 1$, we then get a range of valid choices for $p_1$ that we can interpolate between.
An equivalent and more natural set of tuning parameters therefore is $(\alpha, \beta, x)$, where $x$ interpolates between different configurations that need the same amount of space.
The thresholds can then be calculated as follows.

\begin{align*}
p_{1,\textrm{min}} = max(0, 2 - \beta), \qquad
          p_{1,\textrm{max}} = (3 - \beta) / 2\\
p_1 = p_{1,\textrm{min}} + x(p_{1,\textrm{max}} - p_{1,\textrm{min}}), \qquad
          p_2 = 3 - 2 p_1 - \beta
\end{align*}

\subsection{Queries.}
A query for an object first hashes it to find its class, e.g., its number of candidate cells in the small hash table.
This determines which retrieval data structure needs to be queried for the hash function index.
Additionally, the object is hashed to find its bucket, and therefore the seed and offset.
The value of the PHF is then given by hashing the object with the retrieved hash function index and the bucket's seed, and by adding the bucket's offset.
\cref{fig:queryDataFlow} gives an illustration for the data flow during a query, showing that all three memory access operations are independent of each other, so they can be prefetched or performed in parallel.

\section{Enhancements}
\label{sec:enhancements}
SicHash lends itself to numerous enhancements, which we outline below.

\myparagraph{Minimal Perfect Hashing.}
\label{sec:minimal}
SicHash can be converted to an MPHF by applying the same technique as in PTHash \cite{pibiri2021pthash}.
The idea is to re-map objects with hash function values $>N$ to smaller values by using an Elias-Fano coded sequence of size $M-N$.
Refer to \cref{sec:pthash} for details.
Note that the number of lower bits of the Elias-Fano coded sequence only depends on the load factor that was used before re-mapping.
In practice, we can therefore use a compile-time parameter for faster bit operations in the Elias-Fano sequence.

\myparagraph{Parallel Construction.}
Perfect hash functions can always be parallelized trivially by introducing a new layer on top of the data structure.
SicHash can be parallelized more directly and without effect on the query speed.
The small cuckoo hash tables of each bucket can be constructed embarrassingly parallel.
Retrieval data structures can also be computed in parallel at some small space overhead linear in the number of processors and without query overhead \cite{dillinger2021fast}.

\myparagraph{External Memory Construction.}
SicHash can be adapted to very large inputs: First use external sorting to partition the objects into buckets.
Then, for each bucket, find a cuckoo hash table --- outputting sequences of seeds, offsets, and key-value pairs for the retrieval data structures.
The latter can be fed into an external memory construction of the retrieval data structures \cite{dillinger2021fast}.%
\footnote{BuRR \cite{dillinger2021fast} sorts elements by a hashed starting position in an equation system. By making this position monotonic in the bucket index one could save that sorting step.}

\section{Analysis}
\label{sec:analysis}

\begin{theorem}
Parameters of SicHash can be chosen such that the expected construction time is linear in the input size $N$, that the query time is constant, and that the expected space consumption is $\Oh{N}$ bits.
\end{theorem}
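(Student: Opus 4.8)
The plan is to prove the three assertions—constant query time, $\Oh{N}$ expected space, and expected $\Oh{N}$ construction time—one at a time, after fixing the free parameters suitably. Concretely, I would fix a class distribution $(p_1,p_2,p_3)$ and a load factor $\alpha$ chosen \emph{strictly} below the irregular load threshold $\alpha^\star(p_1,p_2,p_3)$ whose existence follows from \cite{dietzfelbinger2010tight}, together with a constant bucket-size parameter $b$. With this slack below threshold no overloading is needed for the asymptotics, so every bucket succeeds with probability bounded away from $0$; overloading (\cref{sec:overloading}) then only tightens the constants in practice.

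For the query bound I would simply trace the algorithm of \cref{fig:queryDataFlow}. Hashing an object to its class and to its bucket are each $\Oh{1}$; the single access to the relevant retrieval structure is $\Oh{1}$ by \cite{dillinger2021fast}; reading the bucket seed and offset is $\Oh{1}$ (plain arrays, or one $\textit{select}_1$ on the Elias--Fano offsets); and the final evaluation of $h_{i(x)}(x)$ plus the offset is $\Oh{1}$. Because the three memory accesses are mutually independent, this yields the stated constant query cost.

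For the space bound I would account for each component. The three retrieval structures together hold exactly $N$ values of at most $3$ bits with $\Oh{1}$ relative overhead \cite{dillinger2021fast}, hence $\Oh{N}$ bits. The Elias--Fano offset sequence stores $N/b$ monotone values over a universe of size $M = \Oh{N}$, costing $2(N/b) + (N/b)\lceil \log(Mb/N)\rceil = \Oh{N}$ bits. Finally, since each bucket succeeds within $\Oh{1}$ expected seeds, the Golomb--Rice-coded seed array occupies $\Oh{N/b} = \Oh{N}$ bits in expectation.

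The construction bound is where the real work lies. Partitioning costs $\Oh{N}$, and building the retrieval structures and the two metadata arrays is linear \cite{dillinger2021fast}, so it remains to bound the per-bucket cuckoo work $\sum_i \mathbb{E}[T_i]$. For a bucket of $n_i$ objects, one attempt by incremental insertion runs in $\Oh{n_i}$ expected time while the load stays below threshold \cite{fotakis2005space,walzer2022insertion}; capping each attempt at a fixed multiple of $n_i \log n_i$ steps, the number of attempts is dominated by a geometric variable with success probability at least some $c_0 > 0$, so $\mathbb{E}[T_i]$ is a bounded function of $n_i$. Since the $n_i$ are essentially $\mathrm{Poisson}(b)$ with $b$ constant, every fixed moment of $n_i$ is $\Oh{1}$, giving $\mathbb{E}[T_i] = \Oh{1}$ and a total of $\Oh{N}$ over the $N/b$ buckets. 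The hard part will be establishing the uniform lower bound $c_0$ on the single-attempt success probability: I expect to argue that with $\alpha$ a fixed constant below $\alpha^\star$ the per-bucket failure probability is bounded by a constant below $1$ across \emph{all} bucket sizes—decaying for large tables, merely bounded for the smallest—and to combine this with a concentration bound on the largest bucket (of order $\log N / \log\log N$) so that the heavy tail of the $n_i$ contributes only lower-order terms. Making the interplay between the random bucket sizes, the ceiling in the table-sizing rule, and the retry count fully rigorous is the step that needs the most care.
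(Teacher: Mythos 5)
Your query and space arguments coincide with the paper's almost line for line (the paper additionally accounts for the Elias--Fano re-mapping sequence of the minimal variant, which you omit but which is not needed for the plain PHF statement). Where you genuinely diverge is the construction-time bound, and here the paper's route is deliberately cruder and simpler than yours. You bound one insertion attempt via expected-time results for random-walk insertion below threshold, cap each attempt at $\Oh{n_i \log n_i}$ steps, and then wrestle with exactly the issue you flag: a uniform constant lower bound $c_0$ on per-attempt success across all bucket sizes, plus the interaction of capping with the success event (a capped attempt that would eventually have succeeded counts as a failure, so you need, e.g., a Markov argument on the walk length combined with a bound on structural failure). The paper sidesteps all of this by (i) analyzing BFS insertion instead of random walks, which gives a deterministic worst-case bound of $\Oh{n_i^2}$ per attempt with no capping needed, (ii) folding the constant per-attempt success probability into the \emph{parameter choice} itself --- the theorem only claims parameters \emph{can} be chosen, so a bucket load factor ensuring constant success probability is simply assumed, and retries then contribute a constant expected factor --- and (iii) invoking the standard bucket-sorting lemma \cite[Theorem~5.9]{SMDD19}: with constant expected bucket size, a quadratic per-bucket algorithm runs in expected linear total time, because $\sum_i \mathrm{E}[n_i^2] = \Oh{N}$ for (essentially) Poisson bucket sizes. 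Your moment computation for Poisson $n_i$ is in effect a rederivation of that lemma, and your approach buys a sharper per-bucket constant (near-linear rather than quadratic attempts) at the price of the delicate uniform-$c_0$ and capping analysis; the paper's buys a complete, short proof at the price of a looser per-bucket bound that the bucketing absorbs anyway. If you adopt the BFS-plus-quadratic bound and treat the constant success probability as part of the parameter choice, the step you call ``the hard part'' disappears entirely.
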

\begin{proof}
  (Outline) To be able to show linear construction time, we look at the case of
constant expected bucket size, a bucket load factor that
ensures constant success probability of cuckoo hash table construction, space efficient encoding of
offsets using Elias-Fano coding, Golomb-Rice coding of seeds, and cuckoo table insertion using BFS.
We also assume that we  use a retrieval data structure that needs $\Oh{N}$ bits, can be constructed in linear time and supports queries in constant time.
After this proof, we informally discuss what changes for the simple implementation used in our experiments.

\myparagraph{Query time.}
A query evaluates a constant number of hash functions (constant time), performs one access to a retrieval data structure (constant time as assumed above), and decodes a number each in an Elias-Fano coded and a Golomb-Rice coded sequence.
Decoding the numbers boils down to constant time $\textit{select}_1$-operations in a bit vector (see \cref{sec:eliasFano}).

\myparagraph{Space.}
The retrieval data structure uses a linear number of bits by our assumption.
With average bucket size $b$, the Elias-Fano data structure takes $\left(2+\log\frac{Mb}{N}\right)\frac{N}{b}=\Oh{N}$ bits.
For constant success probability of construction, the seed has a geometric distribution and constant expected length, i.e.,  expected space consumption $\Oh{\frac{N}{b}}=\Oh{N}$ bits.
When constructing an MPHF, the additional Elias-Fano coded sequence (see \cref{sec:minimal}) takes $(M-N)(2+\log\frac{N}{M-N}) = N\frac{1}{\alpha-1}(2 + \log\frac{\alpha}{1-\alpha})=\Oh{N}$ bits, where $\alpha$ is the load factor before compaction.

\myparagraph{Construction Time.}
Constructing the retrieval data structures takes linear time by assumption.
Building the data structures for seeds and offsets is obviously possible in linear time.
Construction time for a bucket is at most quadratic in the bucket size (and, with constant success probability, retries contribute only a constant factor in expectation).
With constant expected bucket size, we get the same execution time as a bucket sorting algorithm that uses a quadratic algorithm per bucket, which is expected linear \cite[Theorem~5.9]{SMDD19}.
A similar argument can also be found in RecSplit \cite{esposito2020recsplit}.
\end{proof}

Our implementation gains simplicity and query speed by storing offsets and seeds directly using $\log M$ bits.
For this to be space efficient, we would need average bucket size $b=\Om{\log N}$.
At least with our simple estimation of construction time, this would result in superlinear construction time of $\Om{N\log N}$.
This can be improved using faster construction algorithms. For example, using the Hopcroft-Karp-Karzanov \cite{hopcroft1973n} algorithm, time $\Oh{N\sqrt{\log N}}$ can be proven.
Assuming a result for random graph matching \cite{bast2006matching} also transfers to our case, we would even get $\Oh{N\log\log N}$.
At least when we are sufficiently far from the load threshold, various previous results indicate that linear construction time is possible \cite{fotakis2005space,walzer2022insertion,frieze2009analysis,fountoulakis2013insertion,khosla2013balls, khosla2019faster}. With overloading, tight construction time bounds remain an open problem.

\myparagraph{Load Thresholds.}
\label{sec:loadThresholds}
For a fixed configuration of fractions $p₁$, $p₂$, $p₃$, we now use existing theory to derive a threshold for the asymptotically maximum load factor $c^* = c^*(p₁,p₂,p₃)$ when the bucket size is large.
The hope is that the threshold is also indicative of achievable loads for small to medium bucket sizes.
We use a framework by Lelarge \cite{L:A_New_Approach:2012}, which is quite challenging to understand and formally apply.
Introducing the underlying ideas of belief propagation and local weak convergence, let alone carrying out the required technical work for our case is beyond the scope of this paper.
What we can do is restate the equations characterizing the load threshold $c_d^*$ when each object has the same number $d$ of hash functions and then point out how these equations have to be adapted for the irregular case.
The following equations are taken from Ref. \cite[Theorems 1.1 and 4.1]{L:A_New_Approach:2012}, specialised for $(h,k,\ell) = (d,1,1)$ and simplified.

\begin{align*}
    F(q,c) &= 1-(1-g^B(q))^d + \tfrac{1-\e^{-cdq}(1+cdq)}{c}\\
    g^B(q) &= \e^{-cdq}, \qquad g^A(p) = (1-p)^{d-1}\\
    X(c) &= \{F(q,c) \mid q ∈[0,1], g^A(g^B(q)) = q \}\\
    c_d^* &= \sup\{c > 0 \mid \inf X(c) = 1\}.
\end{align*}

Modified for our case, when objects use $d₁$=2, $d₂$=4, $d₃$=8 hash functions with probability $p₁$, $p₂$, $p₃$, respectively, not much changes in the terms that handle belief propagation at the table cells, except that $d$ has to be replaced with $\bar{d} := \sum_i p_i d_i$.
From the point of view of an object, three possibilities have to be taken into account, which occur with probabilities $p₁$, $p₂$, $p₃$ if the object is chosen uniformly at random, and with probabilities $p₁d₁/\bar{d}$, $p₂d₂/\bar{d}$, $p₃d₃/\bar{d}$ if the object is chosen with a probability proportional to the number of table cells it is incident to.
The correspondingly modified equations read as follows.

\begin{align*}
    F(q,c) &= 1-\sum_i p_i (1-g^B(q))^{d_i} + \tfrac{1-\e^{-c\bar{d}q}(1+c\bar{d}q)}{c}\\
    g^B(q) &= \e^{-c\bar{d}q}, \qquad g^A(p) = \sum_i \frac{p_i d_i}{\bar{d}} (1-p)^{d_i-1}\\
    X(c) &= \{ F(q,c) \mid q ∈[0,1], g^A(g^B(q)) = q\}\\
    c^* &= c^*(p₁,p₂,p₃) = \sup\{c > 0 \mid \inf X(c) = 1\}.
\end{align*}

Consider solutions $(q,c)$ to $g^A(g^B(q)) = q$.
For any $c$, a trivial solution is $(0,c)$.
Any non-trivial solution is uniquely determined by $λ := c\bar{d}q$ since we can compute $q(λ) = g^A(g^B(q)) = g^A(e^{-λ})$ and $c(λ) = \frac{λ}{q(λ)\bar{d}}$ from it.
The corresponding value $F(λ) = F(q(λ),c(λ))$ is

\[
    F(λ) = 1-\sum_i p_i (1-\e^{-λ})^{d_i} + \frac{g^A(e^{-λ})\bar{d}}{λ}(1-\e^{-λ}(1+λ))
\]

We can then rewrite $c^* = \sup\{c > 0 \mid ∄λ > 0: F(λ) < 1 \text{ and } c = c(λ) \}$
and obtain numerical approximations of $c^*$ by plotting $F(λ)$ and $c(λ)$. Indeed, it seems that $c^* = c(λ^*)$ for the largest root $λ^*$ of $F(λ)-1$.
Needless to say, a lot more work would be required to rigorously defend this method.

\section{Experiments}
\label{sec:experiments}
The code and scripts needed to reproduce our experiments are available on GitHub under the General Public License:
\url{https://github.com/ByteHamster/SicHash}.
The repository also contains a Docker container that can build and run a simplified version of the experiments in about 90 minutes.

\definecolor{veryLightGrey}{HTML}{F2F2F2}
\definecolor{colorBmz}{HTML}{000000}
\definecolor{colorBdz}{HTML}{E41A1C}
\definecolor{colorFch}{HTML}{444444}
\definecolor{colorChd}{HTML}{377EB8}
\definecolor{colorChm}{HTML}{A65628}
\definecolor{colorHeterogeneous}{HTML}{4DAF4A}
\definecolor{colorPthash}{HTML}{984EA3}
\definecolor{colorRecSplit}{HTML}{FF7F00}
\definecolor{colorBbhash}{HTML}{F781BF}

\myparagraph{Experimental Setup.}
We run our experiments on an Intel Xeon E5-2670 v3 with a base clock speed of 2.3 GHz.
The machine runs Ubuntu 20.04 with Linux 5.10.0.
We use the GNU C++ compiler version 11.1.0 with optimization flags \texttt{-O3 -march=native}.
We also achieve comparable results on an AMD Ryzen 3950X with a base clock speed of 3.5 GHz.
After small adaptions of the competitors to replace inline assembly and intrinsics, we even achieve comparable results on an ARM Neoverse-N1.
Refer to \cref{sec:additionalExperiments} for measurements on the two additional machines.
As a retrieval data structure, we use Bumped Ribbon Retrieval \cite{dillinger2021fast}, with two alternative configurations: $w=64$ with 2-bit bumping info, and $w=32$ with 1-bit bumping info.
Instead of encoding the per-bucket metadata using Elias-Fano coding and Golomb-Rice coding, we use a plain array, which is faster to access and causes only a small space overhead for sufficiently large buckets.
The extension to minimal perfect hashing stores the re-mapping with Elias-Fano coding, which is based on sdsl's \cite{gog2014theory} arrays of flexible bit width and the select data structures by Kurpicz \cite{kurpicz2022pasta}.
For all our experiments, construction and queries are executed on a single thread.
Objects are strings of uniform random length $\in [10, 50]$ containing random characters except for the zero byte.

\subsection{SicHash Configurations.}
The query performance is independent of the choice of $p_1$ and $p_2$, so we focus the comparison on the construction performance.

\myparagraph{Cuckoo Placement.}
It is possible to construct the cuckoo hash table with random walk insertion, as well as matching based methods.
In our experiments, the random walk variant \emph{rattle kicking} \cite{kuszmaul2016kickout} is usually faster than a construction based on Hopcroft-Karp-Karzanov \cite{hopcroft1973n}.

\begin{figure}[t]
        \definecolor{veryLightGrey}{HTML}{DDDDDD}
    \definecolor{greenToPurple1}{HTML}{762A83}
    \definecolor{greenToPurple2}{HTML}{AF8DC3}
    \definecolor{greenToPurple3}{HTML}{E7D4E8}
    \definecolor{greenToPurple4}{HTML}{D9F0D3}
    \definecolor{greenToPurple5}{HTML}{7FBF7B}
    \definecolor{greenToPurple6}{HTML}{1B7837}
    \pgfplotscreateplotcyclelist{bucketSizeList}{%
        greenToPurple1,mark=o\\%
        greenToPurple2,mark=triangle\\%
        greenToPurple3,mark=pentagon\\%
        greenToPurple4,mark=square\\%
        greenToPurple5,mark=x\\%
        greenToPurple6,mark=diamond\\%
    }
    \centering
    \begin{tikzpicture}
        \begin{axis}[
            xlabel={Bits per object},
            ylabel={MObjects/second},
            width=3.5cm,
            height=3.5cm,
            cycle list name=bucketSizeList,
            mark options={mark indices=1},
            every axis plot/.append style={thick},
          ]
          \addplot coordinates { (2.89058,4.04858) (2.7993,4.01606) (2.76861,3.97456) (2.6778,3.92773) (2.64802,3.885) (2.64724,3.83142) (2.61739,3.79651) (2.61736,3.67107) (2.58679,3.58423) (2.55692,3.09981) (2.53621,3.01568) (2.51627,2.98507) (2.49666,2.91206) (2.49608,2.80269) (2.49572,2.67523) (2.476,2.57599) (2.46562,2.55885) (2.45515,2.28624) (2.43559,2.10172) (2.43545,2.09996) (2.43498,1.84638) (2.41557,1.77368) (2.41548,1.72951) (2.41537,1.63773) (2.40598,1.63132) (2.4055,1.56397) (2.39494,1.43225) (2.37549,1.2213) (2.37533,1.21065) (2.37532,1.15848) (2.37491,1.07319) (2.37441,1.03648) (2.37412,0.972763) (2.35551,0.961723) (2.35455,0.892857) (2.34454,0.828912) (2.33466,0.737898) (2.33459,0.700476) (2.3337,0.647585) (2.31478,0.602192) (2.31442,0.55054) (2.31414,0.517866) (2.29404,0.459982) (2.29352,0.416945) (2.28457,0.4) (2.28438,0.395351) (2.27397,0.34118) (2.25424,0.248731) (2.25355,0.248373) (2.25342,0.218274) (2.23322,0.174862) (2.21319,0.124085) (2.21295,0.108899) (2.19324,0.0834112) (2.19303,0.0711572) (2.19271,0.0649528) (2.17261,0.0549282) (2.16274,0.0448684) (2.13282,0.0226486) (2.13274,0.0163669) (2.10175,0.0102935) };
          \addlegendentry{$b=100$};
          \addplot coordinates { (2.33927,4.04531) (2.31879,3.96197) (2.23733,3.95883) (2.2176,3.92157) (2.17653,3.92157) (2.13711,3.82263) (2.13709,3.72024) (2.11655,3.61795) (2.09626,3.53607) (2.05551,3.42466) (2.05535,3.23415) (2.03576,3.1407) (2.01524,2.88517) (2.01518,2.69107) (1.99562,2.405) (1.9951,2.34192) (1.99491,2.17391) (1.97494,1.91939) (1.97466,1.79856) (1.95468,1.65948) (1.95466,1.52114) (1.95433,1.24502) (1.93456,1.19646) (1.9341,1.11309) (1.91476,0.937383) (1.91422,0.820749) (1.91368,0.73014) (1.89425,0.647501) (1.8937,0.585206) (1.87399,0.429295) (1.87362,0.314584) (1.85382,0.275361) (1.85376,0.234434) (1.85334,0.194311) (1.83342,0.1631) (1.83333,0.132233) (1.81319,0.0871384) (1.81292,0.0724386) (1.79342,0.0458278) (1.79289,0.0338327) (1.77314,0.0214473) };
          \addlegendentry{$b=200$};
          \addplot coordinates { (2.49245,4.37828) (2.2804,4.37063) (2.251,4.363) (2.15942,4.3554) (2.12892,4.329) (2.06825,4.28816) (2.03819,4.28082) (2.03802,4.26257) (1.97757,4.20521) (1.97704,4.17711) (1.94705,4.17014) (1.85544,4.14938) (1.82435,4.02253) (1.81565,3.89712) (1.79555,3.81971) (1.77547,3.73413) (1.76459,3.61272) (1.75473,3.48918) (1.73528,3.1407) (1.73477,3.0175) (1.73456,2.61097) (1.71444,2.59336) (1.71438,2.1796) (1.70404,2.17014) (1.69503,1.97628) (1.69439,1.49611) (1.67429,1.36986) (1.67396,1.0408) (1.65364,0.828638) (1.64411,0.630915) (1.64363,0.581869) (1.63356,0.437063) (1.61388,0.202421) (1.61338,0.198342) (1.61323,0.10062) (1.59362,0.0731764) (1.59339,0.0550673) (1.59294,0.0319538) (1.57325,0.0228444) };
          \addlegendentry{$b=500$};
          \addplot coordinates { (2.34895,4.54133) (2.31861,4.5208) (2.16711,4.4964) (2.10608,4.43656) (2.07676,4.43262) (2.01461,4.42478) (1.98402,4.41306) (1.9549,4.363) (1.9239,4.3554) (1.86394,4.329) (1.8639,4.27716) (1.83354,4.2517) (1.83332,4.22297) (1.79312,4.2123) (1.71236,4.1632) (1.71208,4.11184) (1.71204,4.095) (1.71188,4.08163) (1.69198,4.05515) (1.65173,3.93082) (1.65144,3.9032) (1.63134,3.87597) (1.62188,3.87597) (1.62097,3.73134) (1.6103,3.65764) (1.5908,3.63636) (1.59077,3.51124) (1.57077,3.39674) (1.56048,3.00661) (1.5503,2.74123) (1.55011,2.57865) (1.52996,1.77873) (1.51073,0.381563) (1.49956,0.104106) (1.49941,0.101096) (1.48967,0.00597334) };
          \addlegendentry{$b=5000$};
          \addplot coordinates { (2.02332,3.8432) (1.98296,3.8373) (1.90141,3.79939) (1.84174,3.74251) (1.80073,3.71195) (1.78145,3.7037) (1.78085,3.64166) (1.76055,3.62845) (1.74017,3.56125) (1.72133,3.4626) (1.70065,3.4626) (1.68005,3.42231) (1.63967,3.39674) (1.61997,3.26158) (1.59922,3.13873) (1.59858,2.96209) (1.57886,2.91886) (1.55847,2.76549) (1.55824,2.49252) (1.53819,2.34412) (1.51817,1.50466) (1.49847,0.0429852) };
          \addlegendentry{$b=20000$};
          \addplot coordinates { (2.30314,3.69549) (2.27434,3.67107) (2.18167,3.663) (2.12152,3.64166) (2.09162,3.56633) (1.9997,3.54108) (1.94028,3.5137) (1.93962,3.46981) (1.84814,3.465) (1.81773,3.4626) (1.78751,3.37154) (1.75712,3.32447) (1.72715,3.2113) (1.69699,3.2113) (1.66647,3.04507) (1.63686,2.88517) (1.61541,2.8393) (1.60623,2.72926) (1.59624,2.67094) (1.57605,2.53936) (1.57585,2.28415) (1.57532,2.28415) (1.57507,2.21435) (1.55574,2.13038) (1.5557,2.09468) (1.54567,2.08247) (1.53522,1.85667) (1.53484,1.55473) (1.51542,1.12133) (1.51533,1.07181) (1.51453,0.156357) };
          \addlegendentry{$b=100000$};

          \legend{};
        \end{axis}
    \end{tikzpicture}
    \hfill
    \begin{tikzpicture}
        \begin{axis}[
            xlabel={Bits per object},
            legend columns=3,
            legend to name=legendParetoPlotsByBucketSize,
            width=3.5cm,
            height=3.5cm,
            cycle list name=bucketSizeList,
            mark options={mark indices=1},
            every axis plot/.append style={thick},
          ]
          \addplot coordinates { (2.09929,4.01606) (2.06861,3.97456) (1.97781,3.92773) (1.94804,3.885) (1.94728,3.83142) (1.91744,3.79651) (1.88694,3.58423) (1.85742,3.09981) (1.83676,3.01568) (1.81684,2.98507) (1.79733,2.91206) (1.79685,2.80269) (1.79657,2.67523) (1.77699,2.57599) (1.76669,2.55885) (1.75656,2.28624) (1.73736,2.10172) (1.73726,2.09996) (1.73686,1.84638) (1.71808,1.77368) (1.7089,1.63132) (1.70862,1.56397) (1.69858,1.43225) (1.69856,1.38619) (1.6802,1.2213) (1.68012,1.21065) (1.67995,1.03648) (1.66198,0.961723) (1.66154,0.892857) (1.65228,0.828912) (1.64368,0.737898) (1.62944,0.602192) (1.61088,0.459982) (1.60278,0.4) (1.60272,0.395351) (1.59406,0.34118) (1.5801,0.248731) (1.57948,0.248373) (1.56277,0.174862) (1.5491,0.124085) (1.53336,0.0834112) (1.51969,0.0549282) (1.51201,0.0448684) (1.49206,0.0226486) (1.4724,0.0102935) };
          \addlegendentry{$b=100$};
          \addplot coordinates { (1.99431,4.04531) (1.97384,3.96197) (1.89236,3.95883) (1.87264,3.92157) (1.83157,3.92157) (1.79216,3.82263) (1.77169,3.61795) (1.75143,3.53607) (1.71073,3.42466) (1.71065,3.23415) (1.69113,3.1407) (1.67082,2.88517) (1.65165,2.405) (1.65122,2.34192) (1.65084,2.17391) (1.6314,1.91939) (1.63127,1.79856) (1.61211,1.65948) (1.59339,1.19646) (1.59335,1.11309) (1.57704,0.937383) (1.57688,0.820749) (1.55832,0.647501) (1.54174,0.429295) (1.52409,0.275361) (1.50781,0.1631) (1.49245,0.0871384) (1.47754,0.0458278) (1.46287,0.0214473) };
          \addlegendentry{$b=200$};
          \addplot coordinates { (2.02539,4.3554) (1.9949,4.329) (1.93423,4.28816) (1.90417,4.28082) (1.904,4.26257) (1.84355,4.20521) (1.84303,4.17711) (1.81303,4.17014) (1.72142,4.14938) (1.69034,4.02253) (1.68165,3.89712) (1.66155,3.81971) (1.6415,3.73413) (1.63064,3.61272) (1.62082,3.48918) (1.60148,3.1407) (1.601,3.0175) (1.60077,2.61097) (1.58092,2.59336) (1.57071,2.17014) (1.56203,1.97628) (1.56186,1.49611) (1.54305,1.36986) (1.5235,0.828638) (1.5152,0.630915) (1.51484,0.581869) (1.5055,0.437063) (1.48839,0.202421) (1.48796,0.198342) (1.47146,0.0731764) (1.45484,0.0228444) };
          \addlegendentry{$b=500$};
          \addplot coordinates { (2.09326,4.43656) (2.06393,4.43262) (2.00179,4.42478) (1.9712,4.41306) (1.94208,4.363) (1.91107,4.3554) (1.85112,4.329) (1.85108,4.27716) (1.82072,4.2517) (1.8205,4.22297) (1.7803,4.2123) (1.69953,4.1632) (1.69926,4.11184) (1.69922,4.095) (1.69906,4.08163) (1.67916,4.05515) (1.63891,3.93082) (1.63861,3.9032) (1.61852,3.87597) (1.60906,3.87597) (1.60815,3.73134) (1.59747,3.65764) (1.57798,3.63636) (1.57795,3.51124) (1.55795,3.39674) (1.54766,3.00661) (1.53749,2.74123) (1.53731,2.57865) (1.51726,1.77873) (1.49859,0.381563) (1.48783,0.104106) (1.48769,0.101096) (1.47884,0.00597334) };
          \addlegendentry{$b=5000$};
          \addplot coordinates { (2.0202,3.8432) (1.97984,3.8373) (1.89829,3.79939) (1.83862,3.74251) (1.79761,3.71195) (1.77833,3.7037) (1.77773,3.64166) (1.75743,3.62845) (1.73705,3.56125) (1.71821,3.4626) (1.69752,3.4626) (1.67693,3.42231) (1.63655,3.39674) (1.61685,3.26158) (1.5961,3.13873) (1.59546,2.96209) (1.57574,2.91886) (1.55535,2.76549) (1.55511,2.49252) (1.53507,2.34412) (1.51506,1.50466) (1.49566,0.0429852) };
          \addlegendentry{$b=20000$};
          \addplot coordinates { (2.091,3.56633) (1.99908,3.54108) (1.93966,3.5137) (1.939,3.46981) (1.84752,3.465) (1.81711,3.4626) (1.78689,3.37154) (1.7565,3.32447) (1.72653,3.2113) (1.69637,3.2113) (1.66585,3.04507) (1.63624,2.88517) (1.61479,2.8393) (1.60561,2.72926) (1.59562,2.67094) (1.57543,2.53936) (1.57523,2.28415) (1.5747,2.28415) (1.57445,2.21435) (1.55512,2.13038) (1.55508,2.09468) (1.54504,2.08247) (1.5346,1.85667) (1.53422,1.55473) (1.5148,1.12133) (1.51471,1.07181) (1.51395,0.156357) };
          \addlegendentry{$b=100000$};
        \end{axis}
    \end{tikzpicture}

    \begin{tikzpicture}
        \ref*{legendParetoPlotsByBucketSize}
    \end{tikzpicture}
    \caption{Pareto plot over the construction throughput of different SicHash configurations by bucket size. Load factor $N/M=0.9$. Left: measured space usage. Right: hypothetical space usage, assuming that bucket metadata is encoded with Elias-Fano and Golomb-Rice codes.}
    \label{fig:paretoBucketSize}
\end{figure}
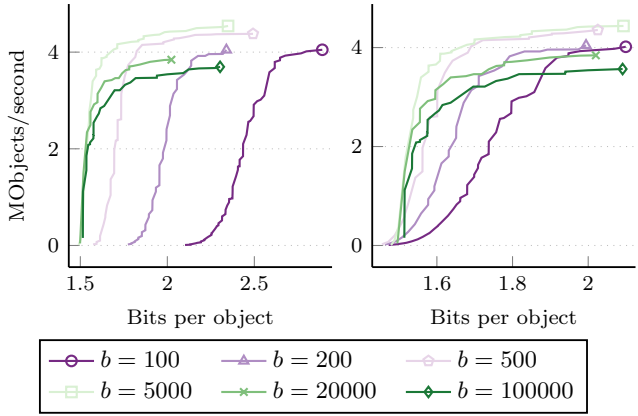

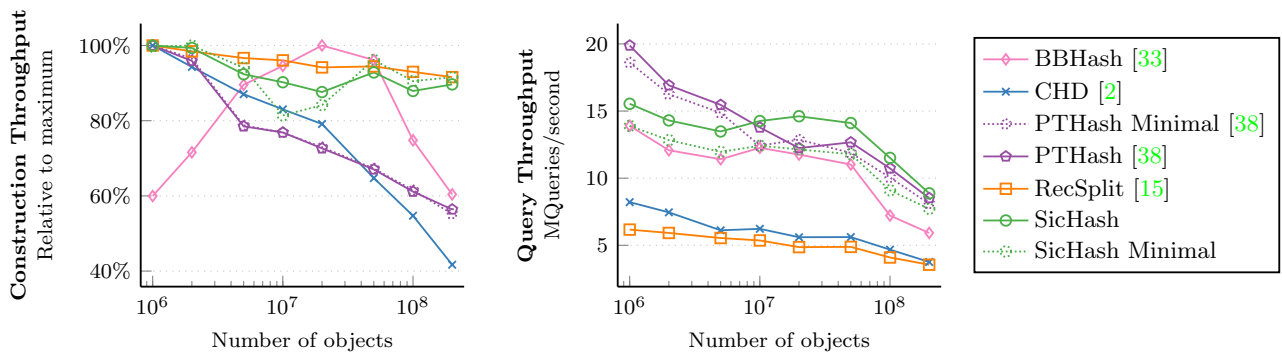
\begin{figure*}[t]
    \centering
    \begin{tikzpicture}
        \begin{axis}[
            plotMaximumLoadFactor,
            xlabel={Number of objects},
            ylabel={\begin{tabular}{c}\textbf{Construction Throughput}\\Relative to maximum\end{tabular}},
            xmode=log,
            yticklabel={\pgfmathprintnumber\tick\%},
            width=43mm,
          ]
          \addplot[mark=diamond,color=colorBbhash,solid] coordinates { (1000000,59.9664) (2000000,71.5909) (5000000,89.5685) (10000000,94.5779) (20000000,100.0) (50000000,96.1469) (1e+08,74.7947) (2e+08,60.3952) };
          \addlegendentry{BBHash \cite{limasset2017fast}};
          \addplot[mark=x,color=colorChd,solid] coordinates { (1000000,100.0) (2000000,94.3086) (5000000,87.0511) (10000000,83.0139) (20000000,79.1194) (50000000,64.7413) (1e+08,54.7227) (2e+08,41.6985) };
          \addlegendentry{CHD \cite{belazzougui2009hash}};
          \addplot[mark=pentagon,color=colorPthash,densely dotted] coordinates { (1000000,100.0) (2000000,96.4976) (5000000,78.7701) (10000000,76.8173) (20000000,72.9602) (50000000,67.2791) (1e+08,61.5176) (2e+08,55.2217) };
          \addlegendentry{PTHash Minimal \cite{pibiri2021pthash}};
          \addplot[mark=pentagon,color=colorPthash,solid] coordinates { (1000000,100.0) (2000000,95.9136) (5000000,78.4973) (10000000,76.8763) (20000000,72.6477) (50000000,67.0437) (1e+08,61.1326) (2e+08,56.3998) };
          \addlegendentry{PTHash \cite{pibiri2021pthash}};
          \addplot[mark=square,color=colorRecSplit,solid] coordinates { (1000000,100.0) (2000000,98.4375) (5000000,96.6593) (10000000,96.0476) (20000000,94.1784) (50000000,94.4373) (1e+08,92.9825) (2e+08,91.5884) };
          \addlegendentry{RecSplit \cite{esposito2020recsplit}};
          \addplot[mark=o,color=colorHeterogeneous,solid] coordinates { (1000000,100.0) (2000000,99.3519) (5000000,92.3514) (10000000,90.2287) (20000000,87.6204) (50000000,92.8393) (1e+08,87.8989) (2e+08,89.6528) };
          \addlegendentry{SicHash};
          \addplot[mark=o,color=colorHeterogeneous,densely dotted] coordinates { (1000000,99.7124) (2000000,100.0) (5000000,94.0893) (10000000,81.4154) (20000000,84.1424) (50000000,96.1467) (1e+08,90.5261) (2e+08,91.5348) };
          \addlegendentry{SicHash Minimal};

          \legend{};
        \end{axis}
    \end{tikzpicture}
    \hspace{3mm}
    \begin{tikzpicture}
        \begin{axis}[
            plotMaximumLoadFactor,
            xlabel={Number of objects},
            ylabel={\begin{tabular}{c}\textbf{Query Throughput}\\MQueries/second\end{tabular}},
            legend columns=1,
            legend style={at={(1.1,0.5)},anchor=west},
            xmode=log,
            width=43mm,
          ]
          \addplot[mark=diamond,color=colorBbhash,solid] coordinates { (1000000,13.9053) (2000000,12.0792) (5000000,11.4129) (10000000,12.2624) (20000000,11.7578) (50000000,11.0205) (1e+08,7.21605) (2e+08,5.92066) };
          \addlegendentry{BBHash \cite{limasset2017fast}};
          \addplot[mark=x,color=colorChd,solid] coordinates { (1000000,8.21558) (2000000,7.45453) (5000000,6.1182) (10000000,6.22704) (20000000,5.60255) (50000000,5.61293) (1e+08,4.67683) (2e+08,3.76053) };
          \addlegendentry{CHD \cite{belazzougui2009hash}};
          \addplot[mark=pentagon,color=colorPthash,densely dotted] coordinates { (1000000,18.6122) (2000000,16.2608) (5000000,14.8711) (10000000,12.4254) (20000000,12.8728) (50000000,11.8831) (1e+08,10.0529) (2e+08,8.09629) };
          \addlegendentry{PTHash Minimal \cite{pibiri2021pthash}};
          \addplot[mark=pentagon,color=colorPthash,solid] coordinates { (1000000,19.8853) (2000000,16.9192) (5000000,15.4634) (10000000,13.7779) (20000000,12.2429) (50000000,12.6721) (1e+08,10.7319) (2e+08,8.52079) };
          \addlegendentry{PTHash \cite{pibiri2021pthash}};
          \addplot[mark=square,color=colorRecSplit,solid] coordinates { (1000000,6.16941) (2000000,5.91973) (5000000,5.54324) (10000000,5.36164) (20000000,4.86547) (50000000,4.88234) (1e+08,4.094) (2e+08,3.55442) };
          \addlegendentry{RecSplit \cite{esposito2020recsplit}};
          \addplot[mark=o,color=colorHeterogeneous,solid] coordinates { (1000000,15.5412) (2000000,14.3021) (5000000,13.4807) (10000000,14.2579) (20000000,14.6135) (50000000,14.1084) (1e+08,11.5119) (2e+08,8.87206) };
          \addlegendentry{SicHash};
          \addplot[mark=o,color=colorHeterogeneous,densely dotted] coordinates { (1000000,13.8979) (2000000,12.8355) (5000000,11.9614) (10000000,12.4182) (20000000,12.1384) (50000000,11.7878) (1e+08,9.09753) (2e+08,7.71724) };
          \addlegendentry{SicHash Minimal};
        \end{axis}
    \end{tikzpicture}
    \caption{Comparison of different competitors by number of objects $N$. Left: Construction throughput relative to each method's maximum. Right: Query throughput. For the PHFs, the load factor is $0.95$. The parameters are selected from the Pareto front in a way that all PHFs achieve a space usage of 1.8 bits per object, BBHash achieves 3.5 bits per object, and all other MPHFs achieve a space usage of 2.2 bits per object.}
    \label{fig:scaling}
\end{figure*}

\begin{figure*}[h]
    \input{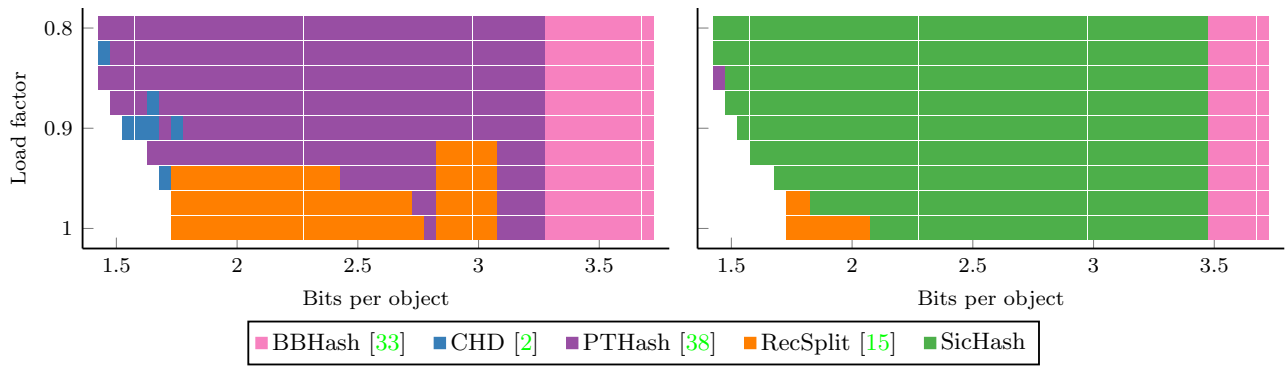}
    \caption{Heat map showing which PHF construction method has the fastest construction time for each given combination of load factor and space usage. Left: Algorithms without SicHash. Right: Algorithms including SicHash. SicHash covers most of the configurations.}
    \label{fig:heatmap}
\end{figure*}

\begin{figure*}[p]
    \input{fig/paretoConstruction}
    \caption{Space usage vs construction throughput for different load factors. Each line is a Pareto front showing configurations of a method that are not dominated by other configurations. Each plot contains a PHF that achieves \emph{at least} the specific load factor. In particular, minimal perfect hash functions are included in all plots. The top six plots use $N=5$~Million objects, while the bottom six plots use $N=20$~Million objects. Methods that are entirely uncompetitive are excluded from the plot with $N=20$~Million objects.}
    \label{fig:paretoConstruction}
\end{figure*}

\myparagraph{Bucket Size.}
For larger buckets, the relative overhead of encoding the metadata is reduced, but they can be overloaded less.
\Cref{fig:paretoBucketSize} (left) shows a Pareto plot\footnote{A configuration is on the Pareto front if it is not dominated by any other configuration with respect to both construction time and space consumption.} of SicHash configurations using different bucket sizes and indicates that a bucket size of $b=5000$ is optimal, which is why we choose that parameter in all other measurements.
\Cref{fig:paretoBucketSize} (right) shows hypothetical values for the space usage when assuming that the per-bucket metadata is encoded with Elias-Fano and Golomb-Rice coding instead of arrays (see \cref{sec:theDataStructure}).

\myparagraph{Parameter Choices.}
A guideline on how to choose parameters for an efficient PHF construction can be determined by running a benchmark of multiple configurations and determining the Pareto front.
The table in \cref{sec:parameterChoices} give exemplary configuration parameters $p_1, p_2$ that are on the Pareto front on our machine.
As discussed in \cref{sec:loadFactors}, an ordinary cuckoo hash table leads to a PHF with a load factor of $97.68\%$ and $2$ bits per object.
Our implementation of SicHash achieves the same load factor using only $1.84$ bits per object ($p_1=49\%$, $p_2=22\%$).
When constructing without overloading, placing the objects in the cuckoo hash tables takes about $50\%$ of the time and constructing the retrieval data structure again takes about $50\%$ of the time.
When increasing the amount of overloading, placing objects in the cuckoo hash table takes longer because more retries are needed.

\subsection{Comparison with Competitors.}
For demonstrating the performance of SicHash, we compare it to competitors from the literature (see \cref{sec:relworkPerfectHashing}).
From the cmph library, we include CHD \cite{belazzougui2009hash}, BDZ \cite{botelho2013practical}, BMZ\footref{fn:mphf} \cite{botelho2004new}, and FCH\footref{fn:mphf} \cite{fox1992faster}.
Additionally, we include WBPM\footref{fn:mphf} \cite{weaver2020constructing}, PTHash \cite{pibiri2021pthash}, RecSplit\footref{fn:mphf} \cite{esposito2020recsplit}, and BBHash\footref{fn:mphf} \cite{limasset2017fast}.
\addtocounter{footnote}{1}
\footnotetext{\label{fn:mphf}Method only supports construction of MPHF, not PHF. Included in plots for all load factors.}

\myparagraph{Construction Scaling.}
\Cref{fig:scaling} (left) shows how the construction throughput scales with $N$.
The configuration parameters of each method are selected from the Pareto front in a way that all PHFs with a load factor of $0.95$ need 1.8 bits per object, BBHash needs 4.0 bits, and all other MPHFs need 2.2 bits per object.
To be more fair about the different load factors and space usage configurations, the construction throughput is given \emph{relative} to each competitor's maximum performance.
The construction throughput of PTHash, BBHash and CHD drops significantly when increasing $N$, while SicHash and RecSplit stay more constant.
This can be explained by the fact that SicHash and RecSplit hash objects to buckets of expected constant size that are then constructed independently.
The other competitors, in contrast, access an array with size proportional to $N$, which leads to lower cache locality.
While putting an additional layer on top of the data structures for more cache locality is always possible, this would affect query time and space usage.

\myparagraph{Query Scaling.}
The query throughput in \cref{fig:scaling} drops when increasing $N$ because the data structure gets larger and can be cached less.
One of the main design goals of PTHash are its fast queries.
SicHash comes close to PTHash or even surpasses it in terms of query throughput, while having a more favorable trade-off between space usage and construction performance.
The minimal perfect variant has a query overhead of about 5--10\%.

\myparagraph{Trade-Off.}
For each competitor, we run experiments with a number of configurations to provide a trade-off between construction time and space usage.
\Cref{fig:paretoConstruction} gives Pareto fronts of our method and competitors for different load factors and input sizes $N$.
SicHash dominates the Pareto front for load factors from $0.85$ to $0.97$.
While there are better competitors for very small or very large data structures, SicHash covers a wide range of configurations (see also \cref{fig:heatmap}).
For some space requirements, SicHash can be constructed up to 3 times faster than the next best competitor.
For small $N \leq 1$~Million, the construction performance of SicHash and the closest competitors is comparable.
Refer to \cref{sec:additionalExperiments} for details.
For MPHFs, RecSplit produces the smallest data structures, but SicHash is competitive for a range of configurations while having significantly faster queries (see \Cref{fig:scaling}).

\section{Conclusion and Future Work}
\label{sec:conclusion}
With SicHash, we present a new perfect hash function, which places objects in a number of small, irregular cuckoo hash tables.
Making the tables small enables \emph{overloading}, which achieves higher load factors than the asymptotic bound.
Using irregular cuckoo hashing enables fine-grained control over the load factors and lower space usage.
We then use space efficient retrieval data structures to store the final placement.
Our implementation improves the state of the art in perfect hash functions for a wide range of load factors and space usage configurations.

Future work might include a parallel construction algorithm and construction using a different insertion strategy than random walk.
From a theoretical point, it would be interesting to look at construction success probabilities for overloaded cuckoo hash tables.

\subsection*{Acknowledgements.}
The authors would like to thank Martin Dietzfelbinger for early discussions leading to this paper.
This project has received funding from the European Research Council (ERC) under the European Union’s Horizon 2020 research and innovation programme (grant agreement No. 882500).

\includegraphics[width=4cm]{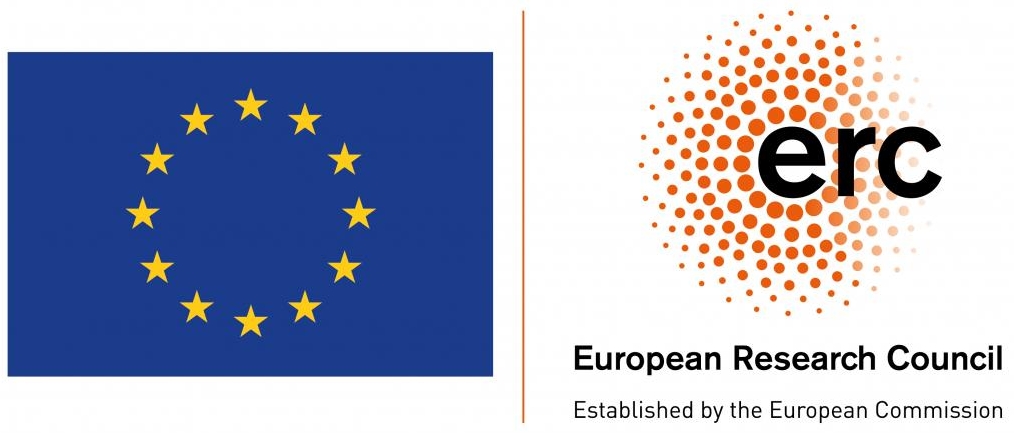}

\clearpage
\bibliographystyle{plainurl}
\bibliography{paper}

\clearpage
\appendix
\onecolumn

\section{Overloading Binary Cuckoo Hash Tables}
\label{sec:overloadBinary}
The effect of overloading cuckoo hash tables is strongly visible for ordinary tables that use only $d=2$ hash functions.
\Cref{fig:overloadBinary} gives box plots of the achieved load factors with different table sizes $M$.

\begin{figure*}[h]
    \input{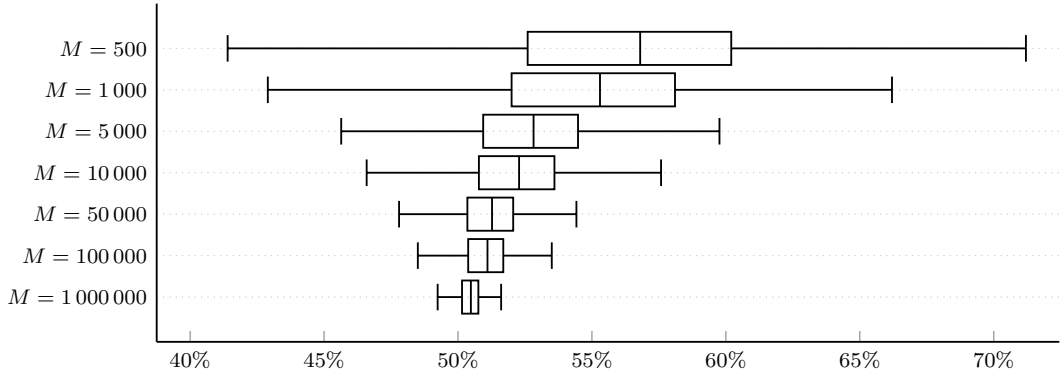}
    \caption{Achived load factors when constructing ordinary, binary cuckoo hash tables of different sizes.}
    \label{fig:overloadBinary}
\end{figure*}

\vspace{-5mm}
\section{Additional Experimental Data}
\label{sec:additionalExperiments}
\Cref{fig:paretoConstruction1M} gives measurements of the construction throughput for $N=1$~Million Objects.
SicHash is still competitive but does not dominate the Pareto front like for larger $N$.
\Cref{fig:paretoConstructionAdditional} gives the comparison with competitors on two additional machines.
The overall behavior is similar to the Intel machine that we run the main experiments on.

\begin{figure*}[h]
    \input{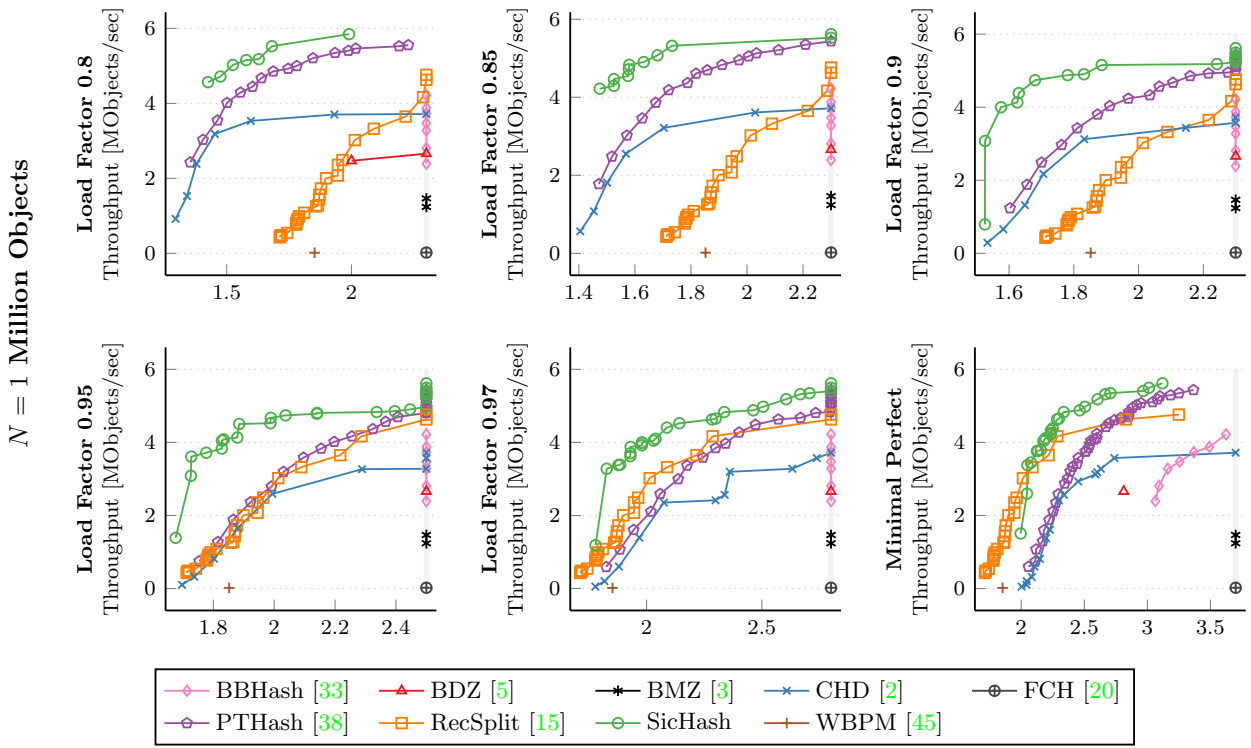}
    \vspace{-3mm}
    \caption{Space usage vs construction throughput for small $N=1$~Million.}
    \label{fig:paretoConstruction1M}
\end{figure*}

\begin{figure*}[p]
    \input{fig/paretoConstructionAdditional}
    \caption{Space usage vs construction throughput on two additional machines. Uses $N=5$~Million objects.}
    \label{fig:paretoConstructionAdditional}
\end{figure*}

\clearpage
\section{Parameter Choices}
\label{sec:parameterChoices}
\Cref{tab:parametersOverview} lists some configuration parameters that are on the Pareto front in \cref{fig:paretoConstruction}.

\begin{table*}[h]
  \caption{Selected configuration parameters from the pareto front (see \cref{fig:paretoConstruction}).}
  \label{tab:parametersOverview}
    \centering
  \addtolength\tabcolsep{-4pt}
  \begin{tabular}[t]{l l l r r }
    \toprule
                                                            & B/Obj & Method                          & MObj/s & ns/Obj \\ \midrule
      \multirow{6}{*}{\rotatebox[origin=c]{90}{$N/M$=1.0}}  & 2.01  & RecSplit, l=5 b=100             & 2.85   &  350   \\
                                                            & 2.06  & PTHash, lf=0.95, c=3.0          & 0.30   & 3333   \\
                                                            & 2.04  & lf=0.97, $p_1$=45\%, $p_2$=31\% & 2.94   &  340   \\ \cmidrule{2-5}
                                                            & 2.50  & RecSplit, l=3, b=50             & 3.86   &  259   \\
                                                            & 2.51  & PTHash, lf=0.95, c=6.0          & 3.06   &  327   \\
                                                            & 2.50  & lf=0.9, $p_1$=21\%, $p_2$=78\%  & 4.91   &  204   \\ \midrule
      \multirow{6}{*}{\rotatebox[origin=c]{90}{$N/M$=0.95}} & 1.78  & RecSplit, l=8, b=250            & 0.76   & 1316   \\
                                                            & 1.76  & PTHash, c=3.0                   & 0.61   & 1639   \\
                                                            & 1.75  & $p_1$=44\%, $p_2$=41\%          & 3.27   &  305   \\ \cmidrule{2-5}
                                                            & 2.09  & RecSplit, l=5, b=50             & 3.17   &  315   \\
                                                            & 2.06  & PTHash, c=5.8                   & 2.67   &  375   \\
                                                            & 2.04  & $p_1$=43\%, $p_2$=18\%          & 4.55   &  220   \\ \midrule
      \multirow{4}{*}{\rotatebox[origin=c]{90}{$N/M$=0.85}} & 1.61  & PTHash, c=4.6                   & 2.85   &  351   \\
                                                            & 1.61  & $p_1$=48\%, $p_2$=51\%          & 4.74   &  211   \\ \cmidrule{2-5}
                                                            & 1.91  & PTHash, c=7.4                   & 4.25   &  235   \\
                                                            & 1.89  & $p_1$=45\%, $p_2$=31\%          & 5.30   &  189   \\
      \bottomrule
  \end{tabular}
\end{table*}

\end{document}